\newtheorem{thm}{Theorem}
\newtheorem{lem}{Lemma}
\newtheorem{defi}{Definition}
\theoremstyle{remark}
\newtheorem*{rem}{Remark}
\newtheorem{cor}{Corollary}
\newcommand{\ket}[1]{\mathop{\left|#1\right>}\nolimits}       
\newcommand{\kb}[2]{| #1\rangle\!\langle #2 |}
\newcommand{\Tr}[1]{\mathop{{\mathrm{Tr}}_{#1}}}              
\newcommand{\Det}{\mathop{{\mathrm{Det}}}}
\newcommand{\bbC}{{\mathbb C}}
\def\hsp{\hspace{.567cm}}
\newcommand{\dg}{\dagger}
\newcommand{\smfrac}[2]{\mbox{$\frac{#1}{#2}$}}
\newcommand{\Smin}[1]{\mathop{{\mathit{S^{\rm min}}{\left(#1\right)}}}}   
\newcommand{\Sminp}[1]{\mathop{{\mathit{S_p^{\rm min}}{\left(#1\right)}}}}   
\def\openone{\mathbb{1}}
\newcommand{\vac}{\mathsf{vac}}
\def\a{\alpha}
\def\b{\beta}
\def\N{\mathcal{N}}
\def\M{\mathcal{M}}
\def\D{\mathcal{D}}
\newcommand{\Cl}[1]{\mathcal{C}l_{1,#1}}               
\def\Dcd{{\check{\D}}}
\def\Dthree{{\D_3}}
\def\Dell{{\D_\ell}}
\def\Dcdell{{\!\!\check{\,\,\D_\ell}}}
\def\C{\mathcal{C}}
\def\A{\mathcal{A}}
\def\G{\mathcal{G}}
\def\C{\mathcal{C}}
\def\P{\mathcal{P}}
\def\S{\mathcal{S}}
\def\B{\mathcal{B}}
\def\U{\mathcal{U}}
\def\S{\mathcal{S}}
\def\T{\mathcal{T}}
\def\HH{\mathscr{H}}       
\def\FF{\mathscr{F}}
\def\lam{\lambda}
\def\e{\varepsilon}
\def\k{\kappa}
\def\o{\omega}
\def\r{\varrho}
\def\s{\sigma}
\def\p{\prime}
\begin{document}
\title{An Infinite Sequence of Additive Channels: \\the Classical Capacity of Cloning Channels}

\author{Kamil Br\'adler
\thanks{School of Computer Science, McGill University, Montreal, Canada}}

\maketitle

\begin{abstract}
    We introduce an infinite sequence of quantum channels for which the Holevo capacity is additive. The channel series is closely related to the quantum channels arising from universal quantum cloning machines. The additivity proof is motivated by a special property the studied channels enjoy:  the property of conjugate degradability. As a consequence of the announced proof, we also provide an easy way of proving the additivity of the Holevo capacity for the original Unruh channel for which the quantum capacity is already known. Consequently, we present not only an infinite series of finite-dimensional channels but also a nontrivial example of an infinite-dimensional channel for which the classical and quantum channel capacities are easily calculable.
\end{abstract}

\begin{IEEEkeywords}
additivity question, classical capacity, degradable and conjugate-degradable channels, Unruh channel
\end{IEEEkeywords}

\section{Introduction}
Recently, a notorious open problem in quantum information theory known as the additivity of the Holevo capacity was finally resolved~\cite{counterexSmin1} with the negative answer. The article culminated a long period of waiting for the answer to the question (later a conjecture) which appeared shortly after people started to ask about the role of quantum correlations for information theory~\cite{HolevoCapAdd}. The former conjecture states that entangled states do not improve the classical capacity of quantum channels. Quantum channel $\N$ is a completely positive (CP) map $\N:\FF\big(\HH^{(I)}_{\rm in}\big)\to\FF\big(\HH^{(O)}_{\rm out}\big)$. $\FF\big(\HH^{(K)}\big)$ is the state space occupied by Hermitean  trace one  operators and  $\HH^{(K)}$ denotes a $K-$dimensional Hilbert space. The ultimate formula for the classical capacity is $C=\lim_{n\to\infty}{{1\over n}C_{\rm Hol}\left(\N^{\otimes n}\right)}$. $C_{\rm Hol}(\N)$ is the Holevo capacity of a channel~$\N$~\cite{HSW} defined as
\begin{equation}\label{HolevoCap}
    C_{\rm Hol}(\N)\stackrel{\rm df}{=}\sup_{\{p_i\r_i\}}{\bigg\{S\bigg(\sum_ip_i\N(\r_i)\bigg)-\sum_i p_iS\bigg(\N(\r_i)\bigg)\bigg\}},
\end{equation}
where $\{p_i\r_i\}$ is the input ensemble $\r=\sum_ip_i\r_i$ and $S(\s)=-\Tr{}\s\log\s$ is the von Neumann entropy~\footnote{$\log$ gives the logarithm to base two.}.
The calculation of $C$ appears to be an intractable problem. The conjecture claimed that $C_{\rm Hol}(\N_1\otimes\N_2)=C_{\rm Hol}(\N_1)+C_{\rm Hol}(\N_2)$ for arbitrary channels $\N_1,\N_2$. This condition is slightly stronger (strong additivity) than if $\N_1=\N_2$ (weak additivity). One can immediately see how the calculation of $C$ might have been much simpler if the conjecture had been correct. Let us stress, however, that even if the conjecture does not hold in general there are important classes of channels for which it holds~\cite{DepolCh,TransDepol}.

The final disproof of the conjecture would not be possible without many important intermediate results. First, it was shown that the additivity of the Holevo capacity is globally (that is, not for a particular channel) equivalent to other additivity questions~\cite{ConjectsEquiv}, particularly to the additivity of the minimum output entropy (MOE)~\cite{Smin}. The MOE belongs to the more general class of entropies known as the minimum output R\'enyi entropy (MORE). The MORE of a channel $\N$ is defined
\begin{equation}\label{Smin}
    \Sminp{\N}\stackrel{\rm df}{=}\min_{\r}{\{S_p(\N(\r))\}},\hsp\r\in\FF(\HH_{\rm in}),
\end{equation}
where $S_p(\r)=(1-p)^{-1}\log{\Tr{}\r^p}$ is the R\'enyi entropy (for $p\to1^+$ we get the von Neumann entropy). The MORE conjecture was disproved for various intervals of $p$ (for $p>1$ in~\cite{counterexsSminp_g1} and for $p\to0$ in~\cite{counterexsSminp_0}) and, as indicated, at last also for $p=1$~\cite{counterexSmin1}. Note that by the concavity of entropy we may restrict ourselves to the minimization over input pure states.

The motivation for the classical capacity study presented in this paper
comes from the analysis of another (infinite-dimensional) quantum channel which appears in the context of quantum field theory in curved spacetime -- the Unruh channel~\cite{UnruhCh}. The Unruh channel occupies an important place in the field of relativistic quantum information and quantum field theory due to its close relationship to two fundamental processes: (i) black hole evaporation or more generally to the process of black hole stimulated emission and (ii) transformation of states prepared by an inertial observer as seen by a uniformly accelerating observer (the Unruh effect~\cite{Unruh76}). We observed~\cite{UnruhCh} that the Unruh channel has a direct sum structure and therefore decomposes into an infinite sequence of finite-dimensional channels which are closely related to the channels arising from universal quantum cloning machines (UQCM) for qubits~\cite{cloners}. We will call them cloning channels and we will prove that there exists a single-letter formula for the Holevo capacity for all of them. It is also known that cloning channels are conjugate degradable~\cite{ConjDeg}.  Channels are called conjugate degradable by virtue of existence of a conjugate degrading map transforming the output of the channel to its complementary output up to complex conjugation. It has been recently shown that the optimized coherent information of conjugate degradable channels is additive~\cite{ConjDeg}. Therefore, as a result of this paper we obtain an infinite sequence of channels for which both the classical and quantum capacity can be calculated in an easy way. As an aside we will show that all studied cloning channels are not only conjugate-degradable but also degradable. Originally, this seemed to be a difficult task to directly~\cite{cloners} prove it.  Here we found a relatively straightforward way to show this fact and conjugate degradability of cloning channels again played an important role in the proof. The second main result of this paper is the proof of additivity of the Holevo capacity for the Unruh channel itself. It provides us with a non-trivial example of an infinite-dimensional channel for which both the classical and quantum capacity are known and easily calculable.

In section~\ref{sec:Unruhchannel} we briefly recall the properties of the Unruh channel~\cite{UnruhCh} and present its decomposition into a sequence of finite-dimensional channels.  Section~\ref{sec:infseq} contains the main result of the paper. Using the structure of cloning channels we prove in the first part that (i) the Holevo capacity of cloning channels is additive and (ii)  all cloning channels are degradable by showing that their complementary channels are entanglement-breaking. In the second part of section~\ref{sec:infseq} we prove the additivity of the Holevo capacity for the infinite-dimensional Unruh channel. We conclude the paper with a technical tool to actually determine the form of degrading channels for cloning channels and illustrate it on a few examples.

If not stated otherwise, note that in this paper by additivity of a channel we mean strong additivity of the Holevo capacity.

\section{Unruh channel}\label{sec:Unruhchannel}
In this section we briefly review the definition and properties of the Unruh channel~\cite{UnruhCh}. The channel naturally appears as the transformation of a photonic qubit prepared by a stationary Minkowski observer if it is detected by a uniformly accelerated observer. It is well known that  inertial and non-inertial observers cannot agree on the notion of a particle. The most dramatic example is the Minkowski vacuum seen by an non-inertial observer as a thermally populated state~\cite{Unruh76}.

In the same spirit, a pure qubit prepared in the Hilbert space of a Minkowski observer  is seen as an infinite-dimensional mixed state in the Hilbert space of the accelerated observer. The responsible transformation reads
\begin{multline}\label{squeeze}
       U_{AC}(r)=\smfrac{1}{\cosh^2{r}}e^{\tanh{r}(a_1^\dg c_1^\dg+a_2^\dg c_2^\dg)}\\
        \times e^{-\ln{\cosh{r}}(a_1^\dg a_1+c_1^\dg c_1+a_2^\dg a_2+c_2^\dg c_2)}e^{-\tanh{r}(a_1c_1+a_2c_2)},
\end{multline}
where $r$ is the proper acceleration of the inertial observer and $AC$ denotes the output Hilbert space as a shorthand for the tensor product of two Hilbert spaces $A$ and $C$. Note that the creation and annihilation operators satisfy the so-called canonical commutation relations
\begin{subequations}
\begin{align}\label{eq:CCR}
[a_i,a^\dg_{j}]&=\delta(i-j),\\
[a^\dg_i,a^\dg_{j}]&=[a_i,a_{j}]=0,
\end{align}
\end{subequations}
where the square brackets denote the commutator and all the degrees of freedom (discrete or continuous) are labeled by $i,j$ (similarly for $c_i$). The exponentiation in Eq.~(\ref{squeeze}) is understood in the operator sense. For an input state $\ket{\psi}_{A'C'} = (\beta a_2^\dg + \alpha a_1^\dg)\ket{\vac}_{A'C'}$, we can further simplify $\ket{\phi}_{AC}=U_{AC}(r)\ket{\psi}_{A'C'}$ as
\begin{align}\label{simp_squeeze}
     \ket{\phi}_{AC}={1\over\cosh^3{r}}&(\beta a_2^\dg+\alpha a_1^\dg)\nonumber\\
     &\times\exp{[\tanh{r}(a_1^\dg c_1^\dg+a_2^\dg c_2^\dg)]}\ket{\vac}_{A'C'}.
\end{align}
From a physical point of view, the modes $c_i$ appear beyond the event horizon of the accelerated observer and are therefore unobservable. Tracing over them, we get a state with an interesting structure, further investigated in~\cite{UnruhCh}. If we reorder the basis according to the total number of incoming photons in modes $a_i$,  we obtain an infinite-dimensional block-diagonal density matrix
\begin{equation}\label{blockdiag}
    \sigma_A=1/2(1-z)^3\bigoplus_{\ell=2}^\infty \ell(\ell-1)z^{\ell-2}\e_\ell,
\end{equation}
where $0\leq z<1,z=\tanh^2{r}$. The states $\e_\ell$ and the corresponding quantum channel will be studied in the next section.

The transformation leading to Eq.~(\ref{blockdiag}) has already been  studied before in a context mentioned in the introduction. The authors of Ref.~\cite{blackhole_clone} analyzed the process of black hole stimulated emission induced by impinging photonic qubits. The stimulated emission dynamics is governed by exactly the same Hamiltonian as the one leading to the unitary operator $U_{AC}$. The reason for this formal similarity lies in the linear relations known as Bogoliubov transformation~\cite{bogo}. Bogoliubov transformation connects the creation and annihilation operators  of the Hilbert space of a Minkowski observer and a uniformly accelerating observer in our case and similarly the Hilbert space of a freely falling observer and an observer in a distant future in case of Ref.~\cite{blackhole_clone}. In the former case the physical parameter of the evolution operator is the proper acceleration $r$ and in the latter case it is the black hole surface gravity~\footnote{Note that the particle-antiparticle basis used in~\cite{blackhole_clone}  exactly corresponds to the dual-rail encoding in which an input state $\ket{\psi}$ is written.}. Even more interestingly, as observed in~\cite{blackhole_clone}, the same Hamiltonian is closely related to the $N\to M$ universal cloning machine for qubits~\cite{cloners} ($\ell=M+1$). In other words, if an observer throws an $N-$qubit photonic state into a black hole (the state is already symmetrized due to the bosonic nature of the photons) another observer in a distant future gets $M$ approximate copies depending on the total  number $M$ of photons he measures. We will study the explicit output of  Eq.~(\ref{blockdiag}) which corresponds to  the case of $1\to(\ell-1)$ cloning machines. 

\section{Additivity of the classical capacity}\label{sec:infseq}
\subsection{The classical  capacity of cloning channels}\label{subsec:infseq}

The previous section served as a physical motivation for the appearance of UQCMs for qubits. In this section we observe that the cloning channels `constitute' the corresponding Unruh  channel in a very specific way. Namely, we will show that the additivity of the Holevo capacity for the $1\to2$ cloning channel implies the additivity of the Holevo capacity for all $1\to(\ell-1)$ cloning channels (that is for all $\ell>3$). Another consequence will be the proof of additivity of the Holevo capacity for the Unruh channel itself.

We first recall the definition of {\it unitarily covariant channels}  introduced in~\cite{CovCh}.
\begin{defi}\label{def:covariance}
Let $G$ be a unitary compact group of Lie type and let  $r_1(g)\in\HH_{\rm in},r_2(g)\in\HH_{\rm out}$  be irreps of $g\in G$. A channel $\N:\FF\big(\HH_{\rm in}\big)\to\FF\big(\HH_{\rm out}\big)$ is unitarily covariant if
\begin{equation}\label{covcond}
    \N\left(r_1(g)\r r_1(g)^\dg\right)=r_2(g)\N(\r)r_2(g)^\dg
\end{equation}
holds for all $\r$ and $g \in G$.
\end{defi}
In the following text, by covariant we mean unitarily covariant. It has been shown that for any covariant channel the following equivalence condition holds
\begin{equation}\label{equivlocal}
    C_{\rm Hol}(\N)=\log{f}-\Smin{\N},
\end{equation}
where $f=\dim{\HH_{\rm out}}$. Nevertheless, for Eq.~(\ref{equivlocal}) to hold the conditions in Definition~\ref{def:covariance} are not necessary and can be relaxed~\cite{WolfEisert}.

Let us stress that  we will leave the domain of the Fock space and adopt new notation. From now on, $\ket{n}$ represents a qudit living in an abstract Hilbert space~$\HH$ and not a Fock state of $n$ photons like in Section~\ref{sec:Unruhchannel}. The reason is that the Fock space formalism is a bit clumsy for the quantum information considerations which will follow. We will make occasional connections from one formalism to another to avoid possible confusion.

Let $W$ be the Hilbert space isometry $W:A'\hookrightarrow AC$ such that $W(\varphi_{A'})=U_{AC}^{(K)}(\ket{\varphi}_{A'}\ket{0}_{C'})$. $U_{AC}^{(K)}$ is a $K-$dimensional unitary transformation defined by its action on an input pure state $\ket{\varphi}_{A'}=\alpha\ket{0}+\beta\ket{1}$ and a reference state $\ket{0}$
\begin{multline}\label{unitary}
    \ket{\varphi}_{A'}\ket{0}_{C'}\xrightarrow{U_{AC}^{(K)}} {\sqrt{2\over(k+1)(k+2)}}\\
    \times\Biggl(\sum_{n=0}^k\alpha\sqrt{n+1}\ket{k-n}_A\ket{k-n}_C\\
    +\beta\sqrt{k-n+1}\ket{k-n+1}_A\ket{k-n}_C\Biggr),
\end{multline}
(thus $K=2(k+1)$). This unitary operation induces a class of CP maps $\e_\ell\stackrel{\rm df}{=}\Tr{C}\left[W(\varphi_{A'})\right]=\Cl{\ell-1}(\varphi_{A'})$ which we will call $1\to(\ell-1)$ cloning channels ($\ell=k+2$). We justify the name in a subsequent part of this section. The explicit output of $\Cl{\ell-1}(\varphi_{A'})$ and the corresponding complementary channel $\bar\k_{\ell-1}\stackrel{\rm df}{=}\S^c_{\ell-1}(\varphi_{A'})=\Tr{A}\left[W(\varphi_{A'})\right]$ for an input qubit $\varphi_{A'}=\openone/2+\vec{n}\cdot\vec{J}^{(2)}$ read
\begin{align}\label{StokesplusConjugateStokes1}
    \e_\ell&={2\over\ell(\ell-1)}\Big(\openone^{(\ell)}(\ell-1)/2+\sum_{i=x,y,z}n_iJ_i^{(\ell)}\Big)\\
    \label{StokesplusConjugateStokes2}
    \bar\k_{\ell-1}&={2\over\ell(\ell-1)}\Big(\openone^{(\ell-1)}\ell/2+\sum_{i=x,y,z}\tilde n_iJ_i^{(\ell-1)}\Big),
\end{align}
where $J_i^{(\ell)}$ are related to the $\ell-$dimensional generators of the $su(2)$ algebra. The $su(2)$ algebra generators are defined~\footnote{More precisely, the $su(2)$ algebra is a compact real form of the special linear algebra $sl(2,\bbC)$.} by
$\big[J^{(\ell)}_+,J^{(\ell)}_-\big]=2J^{(\ell)}_z,\big[J^{(\ell)}_z,J^{(\ell)}_{\pm}\big]=\pm J^{(\ell)}_\pm$ and in the above equations we use $J_x^{(\ell)}=1/2\big(J^{(\ell)}_++J^{(\ell)}_-\big),J^{(\ell)}_y=-i/2\big(J^{(\ell)}_+-J^{(\ell)}_-\big)$. We also defined $n_x=\a\bar \b+\bar \a\b,n_y=i(\a\bar \b-\bar \a\b),n_z=|\a|^2-|\b|^2$ and $\tilde n_x=n_x, \tilde n_y=-n_y,\tilde n_z=n_z$. For the purposes of this paper we consider only input pure states $\|\vec{n}\|_2=1$. Note that  barred operator $\bar\k_{\ell-1}$ indicates its entry-wise complex conjugation which results in transposition for density matrices.

States in Eq.~(\ref{StokesplusConjugateStokes1}) are the states $\varepsilon_\ell$ from Eq.~(\ref{blockdiag}) but stripped of all quantum-optical interpretations.  One could get the same states as outputs of the $1\to (\ell-1)$ qubit UQCM if we traced over the reference system and the channel output was rewritten in the completely symmetric (fixed) basis of $\ell-1$ qubits. Recall that a UQCM is defined as a unitary producing approximate copies of an unknown input state. The output states are in a well defined sense closest to the input state as much as it is allowed by the laws of quantum mechanics and under several reasonable assumptions (such as $SU(2)$ covariance and symmetry with respect to the permutations of the output clones~\cite{cloners_review}). Henceforth, we consider cloning channels $\Cl{\ell-1}$ to be CP maps whose output is composed of all $\ell-1$ clones. For further a generalization to the qudit case we invite readers to Refs.~\cite{capregion_qudits} and \cite{CMP}.

Comparing an input state $\varphi$ with an output $\varepsilon_\ell$ of $\Cl{\ell-1}$ we see that  the transformation preserves the Stokes parameters $n_i$, even as the dimension of the algebra representation changes. We may interpret $\Cl{\ell-1}$ as an input state representation-changing channel. Similarly, the complementary channels $\S^c_{\ell-1}$ also change the representation of the input state accompanied by transposition (complex conjugation). We will make use of this intriguing interpretation of these channels in the proof of Lemma~\ref{lem:S_ell-entbreaking}. Finally, we observe that $\Cl{\ell-1}$ is a covariant channel  (all UQCMs are by definition covariant) and so is $\S^c_{\ell-1}$.

Recall that if $\ell=2$, $\Cl{1}$ is an identity map and the complementary map $\S_1^c$ is just an ordinary trace map. Some interesting things start to happen for $\ell=3$ where $\S^c_{2}(\varphi_{A'})=1/3(\bar\varphi+\openone)$. This is an instance of the transpose depolarizing channel (alias the optimal transposition map for qubits) whose  Holevo capacity is known to be strongly additive~\cite{TransDepol}. It follows that its complement $\Cl{2}$ is strongly additive too~\cite{ComplCh}.

\begin{thm}\label{thm1}
   Cloning channels $\Cl{\ell-1}$ are additive for all~$\ell\geq2$.
\end{thm}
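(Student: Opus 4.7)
The plan is to reduce additivity of $C_{\rm Hol}(\Cl{\ell-1})$ to additivity of the minimum output entropy via covariance, and then transfer the resulting statement to the complementary channel $\S^c_{\ell-1}$, which I expect to be entanglement-breaking. Since $\Cl{\ell-1}$ is unitarily covariant (noted just below Eq.~(\ref{StokesplusConjugateStokes2})), and a tensor product of covariant channels is covariant under the product group, Eq.~(\ref{equivlocal}) applies to $\Cl{\ell-1}^{\otimes n}$ and gives $C_{\rm Hol}(\Cl{\ell-1}^{\otimes n})=n\log\ell-\Smin{\Cl{\ell-1}^{\otimes n}}$. The same formula is available for strong additivity against any other covariant channel, so the additivity of the Holevo capacity for cloning channels is equivalent to the additivity of the MOE on the corresponding tensor products.

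Next I would invoke the standard Stinespring isospectrality: for any pure input $\ket\psi$ to the $n$-fold isometry $W^{\otimes n}$ of Eq.~(\ref{unitary}), the reductions $\Cl{\ell-1}^{\otimes n}(\kb\psi\psi)$ and $(\S^c_{\ell-1})^{\otimes n}(\kb\psi\psi)$ share their nonzero spectrum because they arise from tracing out complementary subsystems of a common pure state. Concavity reduces the MOE to pure inputs, so $\Smin{\Cl{\ell-1}^{\otimes n}}=\Smin{(\S^c_{\ell-1})^{\otimes n}}$. Once I establish that $\S^c_{\ell-1}$ is entanglement-breaking, Shor's additivity theorem for EB channels will force $\Smin{(\S^c_{\ell-1})^{\otimes n}}=n\,\Smin{\S^c_{\ell-1}}=n\,\Smin{\Cl{\ell-1}}$, closing the argument. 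The cases $\ell=2$ (identity) and $\ell=3$ (the complement is the transpose depolarizing channel, as noted above the theorem statement) are already settled, so the essentially new content lies in $\ell\geq4$.

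The main obstacle is therefore the entanglement-breaking property of $\S^c_{\ell-1}$ for all $\ell\geq4$. I would exploit the $SU(2)$-covariance visible in Eq.~(\ref{StokesplusConjugateStokes2}): the channel acts on the qubit Bloch vector by conjugating the $y$-component and embedding it into the $(\ell-1)$-dimensional $su(2)$ irrep. A natural candidate is a ``measure and prepare'' decomposition in which the qubit is measured against the uniform $SU(2)$-covariant POVM on the Bloch sphere, and each outcome is used to re-prepare the spin-$(\ell-1)/2$ coherent state pointing in the conjugate direction. Verifying that this reproduces $\S^c_{\ell-1}$ should reduce to matching the first moments of the uniform measure against the Stokes parameters in the $(\ell-1)$-dimensional irrep, which should fall into place once the coherent-state family is correctly normalized. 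Equivalently, one can try to exhibit a separable decomposition of the qubit--qudit Choi matrix of $\S^c_{\ell-1}$, whose block structure is heavily constrained by the same covariance.
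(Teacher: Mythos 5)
Your overall strategy coincides with the paper's: everything reduces to showing that the complementary channels $\S^c_{\ell-1}$ are entanglement-breaking and then importing Shor's additivity theorem for such channels together with the channel/complement equivalence. The two arguments diverge in how each half is executed. For the entanglement-breaking step, the paper verifies by direct computation that the Choi matrix of $\S^c_2$ is PPT (hence separable on $2\times 2$) and then argues that substituting the generators $J^{(\ell-1)}_k$ for $J^{(2)}_k$ in that separable decomposition produces the Choi matrix of $\S^c_{\ell-1}$ while preserving separability. Your covariant measure-and-prepare construction is a genuinely different route, and arguably a more self-contained one: since $\bar\k_{\ell-1}$ in Eq.~(\ref{StokesplusConjugateStokes2}) carries only monopole and dipole components, integrating spin-coherent-state projectors against the affine weight proportional to $1+\hat m\cdot\hat n$ annihilates all higher tensor ranks by orthogonality, so matching the trace and the first moments (exactly the check you propose) is sufficient and does fix the two surviving coefficients to those of $\S^c_{\ell-1}$. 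This yields an explicit Holevo measure-and-prepare form, whereas the paper's lifting argument leaves the positivity of the substituted factors implicit. (Minor slip: the prepared states should be spin-$(\ell-2)/2$ coherent states, since the output space is $(\ell-1)$-dimensional.)

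The one substantive shortfall is in how you travel back from the complement to $\Cl{\ell-1}$. The paper explicitly means \emph{strong} additivity, i.e.\ $C_{\rm Hol}(\Cl{\ell-1}\otimes\M)=C_{\rm Hol}(\Cl{\ell-1})+C_{\rm Hol}(\M)$ for arbitrary $\M$, and obtains it by combining Shor's strong additivity of the Holevo capacity for entanglement-breaking channels with the cited result that a channel is additive if and only if its complement is~\cite{ComplCh}. Your route---covariance of $\Cl{\ell-1}^{\otimes n}$, Eq.~(\ref{equivlocal}), and isospectrality of the outputs of mutually complementary channels on pure inputs---only controls tensor powers and tensor products with covariant partners, as you yourself note, so as written it delivers weak additivity (plus strong additivity within the covariant class) rather than the full claim. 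You already have all the ingredients to close this: apply Shor's entanglement-breaking theorem directly at the level of $C_{\rm Hol}(\S^c_{\ell-1}\otimes\M)$ for arbitrary $\M$ and then invoke the complementary-channel equivalence for the Holevo capacity, instead of detouring through the minimum output entropy.
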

Before proving the theorem we first introduce the concept of conjugate degradability followed by a useful lemma. The definition of conjugate degradability~\cite{ConjDeg} resembles the one of degradability~\cite{DegCh} which we present for the sake of completeness.
\begin{defi}
(i) A channel $\N$ is degradable if there exists a map $\D$ called a degrading map which degrades the channel to its complementary channel $\N^c$
\begin{equation}\label{defCD}
    \D\circ\N=\N^c.
\end{equation}
We say that a channel is anti-degradable if its complementary channel is degradable.\\
(ii) A channel $\N$ is conjugate degradable if there exists a map $\Dcd$ called a conjugate degrading map which degrades the channel to its complementary channel $\N^c$ up to complex conjugation $\C$
\begin{equation}\label{defCD}
    \Dcd\circ\N=\C\circ\N^c.
\end{equation}
\end{defi}
A single-letter quantum capacity formula  exists for all degradable and conjugate degradable channels~\cite{DegCh,ConjDeg}.
\begin{lem}\label{lem:S_ell-entbreaking}
The complementary channels of all cloning channels $\Cl{\ell-1}$ are entanglement-breaking.
\end{lem}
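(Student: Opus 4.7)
The plan is to construct an explicit measure-and-prepare (Holevo) form for $\S^c_{\ell-1}$, which is the defining characterization of entanglement-breaking channels. The starting observation, from Eq.~(\ref{StokesplusConjugateStokes2}), is that $\bar\k_{\ell-1}$ lives in the $(\ell-1)$-dimensional, spin-$j$ irrep of $SU(2)$ with $j=(\ell-2)/2$, and is completely determined by its single dipole moment $\tilde{\vec n}$. I would introduce the spin coherent states $\ket{\hat m}_j$ (rotations of the highest-weight vector to $\hat m\in S^2$) and use the two standard identities
\begin{eqnarray*}
\frac{2j+1}{4\pi}\int d\Omega_{\hat m}\, \ket{\hat m}_j\bra{\hat m}_j &=& \openone^{(\ell-1)},\\
\frac{2j+1}{4\pi}\int d\Omega_{\hat m}\, \hat m_i\,\ket{\hat m}_j\bra{\hat m}_j &=& \frac{1}{j+1}J_i^{(\ell-1)},
\end{eqnarray*}
the second following from Schur's lemma together with a short trace evaluation using $\bra{\hat m}_j J_i^{(\ell-1)}\ket{\hat m}_j = j\hat m_i$ and $\Tr{}\big((J_i^{(\ell-1)})^2\big) = j(j+1)(2j+1)/3$. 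Combining them and matching constants against Eq.~(\ref{StokesplusConjugateStokes2}), while noting $(2j+1)(j+1)=\ell(\ell-1)/2$, yields the integral representation
\begin{equation*}
\bar\k_{\ell-1} = \frac{1}{4\pi}\int d\Omega_{\hat m}\, \big(1+\hat m\cdot\tilde{\vec n}\big)\, \ket{\hat m}_j\bra{\hat m}_j,
\end{equation*}
with the coefficient in front of $\hat m\cdot\tilde{\vec n}$ pinned to exactly $1$, independent of $\ell$, and hence nonnegative on $S^2$.

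The second stage is to realize this weight as the response of a qubit POVM on the input $\varphi$. Writing $\tilde{\hat m}=(m_x,-m_y,m_z)$, so that $\hat m\cdot\tilde{\vec n}=\tilde{\hat m}\cdot\vec n$, the qubit identity $\Tr{}\big(\ket{\tilde{\hat m}}\bra{\tilde{\hat m}}\varphi\big)=\frac{1}{2}(1+\tilde{\hat m}\cdot\vec n)$ exhibits the weight as $\Tr{}\big(E_{\hat m}\varphi\big)$ for $E_{\hat m}=\frac{1}{2\pi}\ket{\tilde{\hat m}}\bra{\tilde{\hat m}}$. This is a valid POVM: positivity is immediate, and $\int d\Omega_{\hat m}E_{\hat m}=\openone$ follows from the $j=1/2$ resolution of identity. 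Putting everything together gives the Holevo form
\begin{equation*}
\S^c_{\ell-1}(\varphi) = \int d\Omega_{\hat m}\, \Tr{}\big(E_{\hat m}\varphi\big)\,\ket{\hat m}_j\bra{\hat m}_j,
\end{equation*}
which is by definition entanglement-breaking. The degenerate small-$\ell$ cases are absorbed by the same formula: $\ell=2$ reduces to the trivial trace map onto a one-dimensional space, and $\ell=3$ (with $j=1/2$) produces the known entanglement-breaking qubit transpose depolarizing channel $\S_2^c$ already flagged in the text.

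The main thing worth being careful about is the normalization in the integral representation of $\bar\k_{\ell-1}$: one really needs the coefficient multiplying $\hat m\cdot\tilde{\vec n}$ to come out to exactly $1$, since any larger value would break nonnegativity of the weight and invalidate the entire construction. Fortunately this is forced by the $SU(2)$ constants $(2j+1)(j+1)=\ell(\ell-1)/2$ matching the prefactor $2/[\ell(\ell-1)]$ appearing in Eq.~(\ref{StokesplusConjugateStokes2}), so the argument works uniformly in $\ell$.
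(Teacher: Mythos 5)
Your proof is correct, and it takes a genuinely different route from the paper. The paper argues indirectly: it checks by direct computation that the Choi matrix $R_2=(\openone\otimes\S^c_2)(\Phi^+)$ of the $\ell=3$ complementary channel is PPT, invokes the two-qubit PPT$\,\Rightarrow\,$separable theorem to extract a separable decomposition $R_2=\sum_i q_i\chi_i\otimes\upsilon_i$, re-expresses it in the $J^{(2)}_k\otimes J^{(2)}_l$ basis, and then claims that the substitution $J^{(2)}_l\mapsto J^{(\ell-1)}_l$ (the representation-changing structure of Eq.~(\ref{StokesplusConjugateStokes2})) carries this separable decomposition to a separable decomposition of $R_{\ell-1}$ for every $\ell$. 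Your argument instead produces an explicit Holevo measure-and-prepare form
\begin{equation*}
\S^c_{\ell-1}(\varphi)=\int d\Omega_{\hat m}\,\Tr{}\big(E_{\hat m}\varphi\big)\,\ket{\hat m}_j\!\bra{\hat m}_j,
\end{equation*}
with a qubit POVM $\{E_{\hat m}\}$ and prepared spin-coherent states, and I have checked the normalizations: with $j=(\ell-2)/2$ the identity $(2j+1)(j+1)=\ell(\ell-1)/2$ does pin the weight to $1+\hat m\cdot\tilde{\vec n}\geq 0$ exactly as you say, so the representation matches Eq.~(\ref{StokesplusConjugateStokes2}) for every $\ell$. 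What your approach buys is uniformity and self-containedness: you never need the two-qubit PPT criterion, and you sidestep the one step of the paper's proof that deserves more care than it receives there, namely the claim that the generator substitution preserves separability (which requires observing that each qubit factor $\upsilon_i$ is mapped to a bona fide positive operator in dimension $\ell-1$). As a bonus, your construction exhibits the measurement and the prepared ensemble concretely, which makes the anti-degradability of $\Cl{\ell-1}$ (and hence the degradability claimed in the paper's first corollary) explicit rather than a citation to Cubitt et al. What the paper's approach buys is brevity and the structural insight that a single low-dimensional computation propagates to the whole family. The only point to state explicitly in a polished write-up is that your identity between $\S^c_{\ell-1}(\varphi)$ and the integral holds for all density operators $\varphi$ and hence, by linearity, as an identity of maps, which is what the entanglement-breaking property requires.
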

\begin{figure}[t]
    \begin{center}
    \resizebox{5cm}{4cm}{\includegraphics{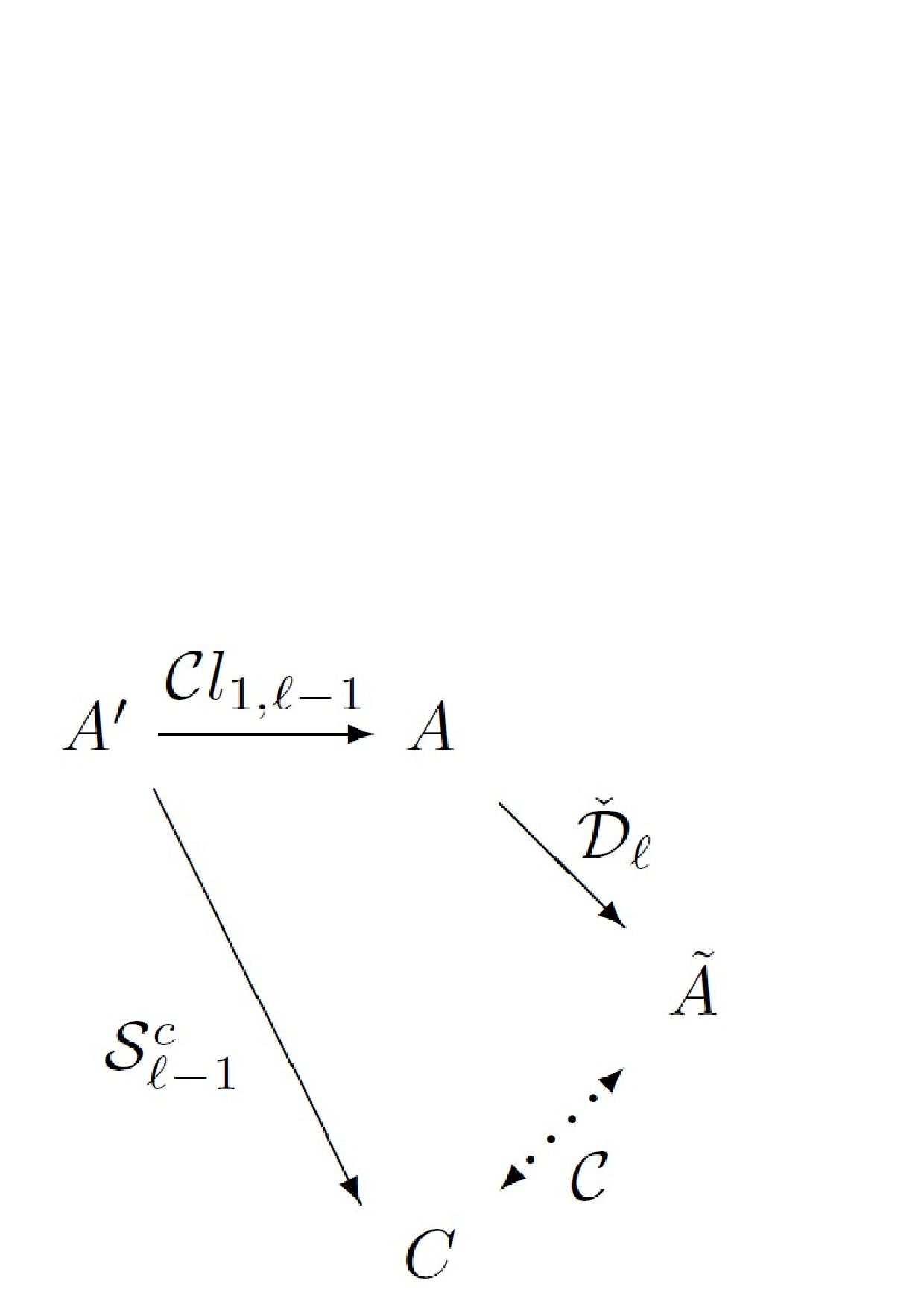}}
    \caption{\label{diag:chnl_struct}In the diagram, $\Cl{\ell-1}$ is a $1\to (\ell-1)$ cloning channel with a conjugate degrading map $\Dcdell$. The dotted line signalizing a non-CP map is complex conjugation $\C:\bar\k_{\ell-1}\leftrightarrow\k_{\ell-1}$.}
        \end{center}
\end{figure}
Looking at the diagram in Fig.~\ref{diag:chnl_struct} we recall an observation made in~\cite{ConjDeg}. The complementary channel of a conjugate degradable channel is either entanglement-breaking or entanglement-binding~\cite{entbind}. The reason is that the transposed output of the complementary channel is by definition a positive operator and this condition is satisfied only by the two classes of channels. If we show that the complementary channels of cloning channels $\Cl{\ell-1}$ are entanglement-breaking then the Holevo capacity of cloning channels is additive too. It follows from the fact that  entanglement-breaking channels single-letterize the classical capacity~\cite{entbreak} together with the result of Ref.~\cite{ComplCh} showing that a channel is additive if and only if its complementary channel is additive.
\begin{proof}[Proof of Lemma~\ref{lem:S_ell-entbreaking}]
   Let us first take a look at Fig.~\ref{fig:chnl_struct} representing the intricate mutual dependence of $\Cl{\ell-1}$ for all $\ell\geq3$. By a direct calculation we verify that $\S_2^c$ is entanglement-breaking since $R_2=(\openone\otimes\S_2^c)(\Phi^+)$ is a PPT state which stands for
    positive partial transpose. Therefore, it is a separable state (the input and output Hilbert space is two-dimensional). As a consequence we may write
   \begin{equation}\label{eq:entbreakoutput2}
    R_2=\sum_iq_i\chi_i\otimes\upsilon_i=\sum_iq_i\chi_i
    \otimes\Big( {1\over2}\openone^{(2)} + {1\over3}\sum_{j=1}^3 \tilde n_{j}J_j^{(2)}\Big)_i
   \end{equation}
   where $\chi_i,\upsilon_i$ are positive operators and $0\leq q_i\leq1,\sum_iq_i=1$. The second equation is valid in general since the $su(2)$ algebra generators form an orthogonal basis. To continue let us recall how we determine the output of the rest of complementary channels $\S_{\ell-1}^c$. We found the answer in Eq.~(\ref{StokesplusConjugateStokes2}). We get the output state by a mere exchange of the $J^{(2)}$ generators of the $su(2)$ algebra for higher-dimensional generators $J^{({\ell-1})}$. The coefficients $\tilde n_i$ stay preserved and $\bar\k_{\ell-1}$ is a density operator for all $\ell$. Hence if we write $R_{\ell-1}=(\openone\otimes\S_{\ell-1}^c)(\Phi^+)$ then
   \begin{equation}\label{eq:entbreakoutput_ell}
    R_{\ell-1}={2\over\ell(\ell-1)}\sum_iq_i\chi_i
    \otimes\Big( \openone^{(\ell-1)}/2 + \sum_{j=1}^3 \tilde n_{j}J_j^{(\ell-1)}\Big)_i
   \end{equation}
   is again a valid quantum state and moreover manifestly separable. It follows that all $\S_{\ell-1}^c$ are entanglement-breaking.
\end{proof}
\begin{rem}
    For a generalization of the above Lemma to the case of complementary channels of the qudit cloners see~\cite{capregion_qudits}.
\end{rem}
\begin{figure}[t]
\begin{center}
    \resizebox{8cm}{4.4cm}{\includegraphics{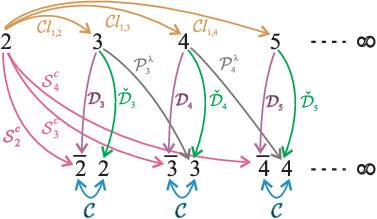}}
    \caption{The relation among various cloning channels $\Cl{\ell-1}$ is sketched here. Numbers indicate the dimension of the particular Hilbert space $\HH$ for the output of $\Cl{\ell-1}$ and the output of their complementary channels $\S_{\ell-1}^c$. In the lower row we further distinguish between the space occupied by the actual output $\bar\k_{\ell-1}$ of the channel $\S_{\ell-1}^c$ (barred numbers) and its complex conjugated version (unbarred numbers). $\P_\ell^\lam$ is a depolarizing channel where $\lam=(\ell-1)/(\ell+1)$ and $\Dcdell$ is a conjugate degrading map.}
    \label{fig:chnl_struct}
    \end{center}
\end{figure}
\begin{proof}[Proof of Theorem~\ref{thm1}]
     All entanglement-breaking channels single-letterize the classical capacity quantity~\cite{entbreak}. Since the  complementary channels of cloning channels $\Cl{\ell-1}$ are entanglement-breaking $\Cl{\ell-1}$ are therefore also additive.
\end{proof}
\begin{cor}
    The previous theorem further shed some light on the properties of $1\to(\ell-1)$ cloning channels. Invoking the result of Cubitt et al.~\cite{degch_study} stating that all entanglement-breaking channels are anti-degradable it follows that all cloning channels $\Cl{\ell-1}$ are also degradable. Degradable channels are known to possess a single-letter formula for the quantum capacity so this result confirms the same findings from Ref.~\cite{ConjDeg} based on the property of conjugate degradability.
\end{cor}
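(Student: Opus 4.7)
The plan is to chain two facts that have already been established elsewhere in the excerpt, so the proof of the corollary is essentially a bookkeeping exercise rather than a new technical computation. First I would invoke Lemma~\ref{lem:S_ell-entbreaking}, which says that for every $\ell\geq 3$ the complementary channel $\S^c_{\ell-1}$ of the cloning channel $\Cl{\ell-1}$ is entanglement-breaking (the $\ell=2$ case is trivial: $\Cl{1}$ is the identity and is its own degrading map). Then I would apply the result of Cubitt, Ruskai and Smith cited as~\cite{degch_study}, which states that every entanglement-breaking channel is anti-degradable.

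Next I would unwind the definitions. By our definition, a channel is anti-degradable if its complementary channel is degradable. Applying this to $\S^c_{\ell-1}$, whose complement (in the sense of Stinespring, and up to the usual isometric equivalence on the environment Hilbert space) is precisely the cloning channel $\Cl{\ell-1}$ itself, one concludes that $\Cl{\ell-1}$ admits a degrading map $\D_\ell$ satisfying $\D_\ell\circ\Cl{\ell-1}=\S^c_{\ell-1}$. This gives exactly the claim of the corollary.

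Finally, the statement about the quantum capacity follows immediately: degradable channels admit a single-letter quantum capacity formula $Q(\N)=\max_\r I_c(\r,\N)$ by the result of Devetak--Shor cited as~\cite{DegCh}. Since the conjugate-degradability argument of~\cite{ConjDeg} had already produced the same single-letter formula, the corollary merely reconfirms that conclusion through the stronger property of ordinary degradability.

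I do not expect any real obstacle here. The only point that needs a moment of care is matching up the two Stinespring dilations: what one calls ``the complement of the complement'' is only canonical up to an isometry on the environment, but since degradability is preserved under such isometries the conclusion goes through without further work. The actual construction of $\D_\ell$ as an explicit map is postponed to the later section on constructing degrading maps and is not needed for the logical content of the corollary.
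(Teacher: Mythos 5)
Your argument is correct and follows essentially the same route as the paper: Lemma~\ref{lem:S_ell-entbreaking} gives that $\S^c_{\ell-1}$ is entanglement-breaking, the Cubitt--Ruskai--Smith result upgrades this to anti-degradability, and unwinding the paper's definition of anti-degradability (together with the fact that the complement of the complement is the original channel up to an isometry on the environment) yields degradability of $\Cl{\ell-1}$, with the single-letter quantum capacity formula then following from Devetak--Shor. Your explicit handling of the $\ell=2$ case and of the isometric-equivalence subtlety is slightly more careful than the paper's one-line statement, but it is the same proof.
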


\begin{cor}
     We are now able to explicitly write down the formula for the classical capacity. Since all $\Cl{\ell-1}$ are covariant we suitably choose the coefficients $\a,\b$ such that states $\e_\ell$ from Eq.~(\ref{StokesplusConjugateStokes1}) are diagonal ($\a=1,\b=0$). Then
    $$
    \e_\ell={1\over\Delta_\ell}\sum^{\ell-1}_{k=0}k\kb{k}{k},
    $$
    where $\Delta_\ell=\ell(\ell-1)/2$. Hence, considering $\log{f}=\log{\ell}$ in Eq.~(\ref{equivlocal}), we get
   \begin{equation}\label{eq:cloningchannelclasscap}
    C(\Cl{\ell-1})=1-\log{(\ell-1)}+{1\over\Delta_\ell}\sum_{k=0}^{\ell-1}k\log{k}.
   \end{equation}

\end{cor}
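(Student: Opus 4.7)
The plan is to chain together the three prior ingredients. First, Theorem~\ref{thm1} lets me replace the regularized capacity by a single-letter Holevo quantity, so $C(\Cl{\ell-1})=C_{\rm Hol}(\Cl{\ell-1})$. Second, $\Cl{\ell-1}$ is covariant and $\dim\HH_{\rm out}=\ell$, so the identity~(\ref{equivlocal}) gives $C_{\rm Hol}(\Cl{\ell-1})=\log\ell-\Smin{\Cl{\ell-1}}$. What remains is a concrete evaluation of the minimum output entropy.

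For the MOE I would use covariance twice. By concavity of $S$ the minimum is attained on pure inputs; since the input $SU(2)$-irrep acts transitively on pure qubits and covariance implies that the outputs of two $SU(2)$-related pure inputs are conjugate by $r_2(g)$, every pure input yields the same output entropy. Hence I may pick the convenient pure state $\alpha=1,\beta=0$, for which $\vec n=(0,0,1)$ and~(\ref{StokesplusConjugateStokes1}) reduces to
\begin{equation*}
\e_\ell=\Delta_\ell^{-1}\bigl((\ell-1)\openone^{(\ell)}/2+J_z^{(\ell)}\bigr),
\end{equation*}
which is already diagonal in the weight basis.

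To finish I need the spectrum of $J_z^{(\ell)}$. Since $\{J_i^{(\ell)}\}$ is the spin-$(\ell-1)/2$ irrep of $su(2)$ (acting on the symmetric subspace of $\ell-1$ qubits that carries the UQCM output), the eigenvalues of $J_z^{(\ell)}$ are $-(\ell-1)/2,\ldots,(\ell-1)/2$; a shift and rescale turns the spectrum of $\e_\ell$ into $\{k/\Delta_\ell\}_{k=0}^{\ell-1}$, recovering the diagonal form stated. The von Neumann entropy is then the routine sum $\log\Delta_\ell-\Delta_\ell^{-1}\sum_{k=0}^{\ell-1}k\log k$, and combining with $\log\ell-\log\Delta_\ell=1-\log(\ell-1)$ in Eq.~(\ref{equivlocal}) yields the announced formula.

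The only step that requires any thought is the reduction of the MOE to a single pure input via covariance of the input representation; everything else is bookkeeping with the $su(2)$ weight spectrum and the entropy of a diagonal matrix, so I do not anticipate a genuine obstacle.
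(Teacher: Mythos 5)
Your proposal is correct and follows essentially the same route as the paper: covariance reduces the minimum output entropy to the single diagonal output at $\a=1,\b=0$, whose spectrum $\{k/\Delta_\ell\}$ gives the entropy, and Eq.~(\ref{equivlocal}) with $f=\ell$ then yields the formula. You merely make explicit the steps the paper leaves implicit (transitivity of $SU(2)$ on pure qubits, the weight spectrum of $J_z^{(\ell)}$, and the algebra $\log\ell-\log\Delta_\ell=1-\log(\ell-1)$), all of which check out.
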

A potentially useful consequence of Lemma~\ref{lem:S_ell-entbreaking} is the fact that a composition of a cloning channel and a depolarizing channel $\P^\lam_{\ell}\circ\Cl{\ell-1}=\Cl{\ell-1}\circ\P^\lam_2$ (holds for $\lam=(\ell-1)/(\ell+1)$) is weakly additive (this is not needed for the purpose of this paper). Recall the definition of the depolarizing channel $\P^\lam_\ell(\r)=\lam\r+(1-\lam)\openone^{(\ell)}/\ell$ for ${-1/(\ell^2-1)}\leq\lam\leq1$ To prove the claim we make use of the following lemma.
\begin{lem}
Let $\N$ be an additive channel (weakly or not). Then if $\M=\C\circ\N$ is another channel, where $\C$ denotes complex conjugation in a given basis, $\M$ is weakly additive as well.
\end{lem}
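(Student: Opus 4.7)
The plan is to exploit the fact that complex conjugation $\C$, regarded as a map on density matrices, is entry-wise conjugation in a fixed basis and therefore preserves the spectrum. In particular $S(\C(\r))=S(\r)$ for every state $\r$, and the same holds for $\C^{\otimes n}$ acting on $\HH^{\otimes n}$ because $\C^{\otimes n}$ is again complex conjugation (in the product basis) of density operators on the joint system.

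With this observation in hand, I would first show $C_{\rm Hol}(\M)=C_{\rm Hol}(\N)$. For any input ensemble $\{p_i,\r_i\}$ the $\bbR$-linearity of $\C$ gives $\sum_ip_i\M(\r_i)=\C\bigl(\sum_ip_i\N(\r_i)\bigr)$, so the spectrum-preservation of $\C$ yields
\begin{equation*}
S\Bigl(\sum_ip_i\M(\r_i)\Bigr)-\sum_ip_iS\bigl(\M(\r_i)\bigr)=S\Bigl(\sum_ip_i\N(\r_i)\Bigr)-\sum_ip_iS\bigl(\N(\r_i)\bigr),
\end{equation*}
and taking the supremum over ensembles identifies the two Holevo capacities.

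Repeating this identical argument with $\M^{\otimes n}=\C^{\otimes n}\circ\N^{\otimes n}$ in place of $\M$, with ensembles now ranging over states on $\HH_{\rm in}^{\otimes n}$, produces $C_{\rm Hol}(\M^{\otimes n})=C_{\rm Hol}(\N^{\otimes n})$. Combining this with the weak additivity hypothesis on $\N$ gives
\begin{equation*}
C_{\rm Hol}(\M^{\otimes n})=C_{\rm Hol}(\N^{\otimes n})=n\,C_{\rm Hol}(\N)=n\,C_{\rm Hol}(\M),
\end{equation*}
which is weak additivity of $\M$.

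The statement is essentially bookkeeping; the only points that need verification are that $\C$ is $\bbR$-linear (so it distributes over convex combinations of density matrices) and that $\C^{\otimes n}$ coincides with complex conjugation on $\HH^{\otimes n}$ in the product basis. Both are immediate from the definitions, so no genuine obstacle arises. Note that we never need $\C$ itself to be completely positive -- the hypothesis that $\M=\C\circ\N$ is a channel is used only to guarantee that $\M^{\otimes n}$ is a well-defined channel whose Holevo capacity is the object of interest.
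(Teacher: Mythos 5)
Your proposal is correct and follows essentially the same route as the paper: both arguments rest on the observation that entry-wise complex conjugation preserves spectra (hence entropies), so that $C_{\rm Hol}(\M^{\otimes n})=C_{\rm Hol}(\N^{\otimes n})$ for every $n$, from which weak additivity of $\M$ follows from that of $\N$. Your write-up is in fact somewhat more careful than the paper's, which compresses the spectrum-preservation step into a remark about ``the same characteristic equation.''
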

\begin{rem}
The lemma is intended to hold for situations when the map $\M$ is a CP map. This is not always the case (for instance, if $\N$ is an identity channel).
\end{rem}
\begin{proof}
We write
\begin{subequations}
    \begin{align}
        &C_{\rm Hol}((\C\circ\N)^{\otimes n})\\
        &=\sup_{\{p_i\r_i\}}{\bigg\{S\bigg(\sum_ip_i\overline{\N^{\otimes n}(\r_i)}\bigg)-\sum_i p_iS\bigg(\overline{\N^{\otimes n}(\r_i)}\bigg)\bigg\}}\\
        &=\sup_{\{p_i\r_i\}}{\bigg\{S\bigg(\sum_ip_i{\N^{\otimes n}(\r_i)}\bigg)-\sum_i p_iS\bigg({\N^{\otimes n}(\r_i)}\bigg)\bigg\}}\\
        &=C_{\rm Hol}(\N^{\otimes n}).
    \end{align}
\end{subequations}
The first eqaulity follows from definition (Eq.~(\ref{HolevoCap})) and the second equality holds thanks to $S(\s)=S(\bar\s)$ which is valid for all $\s$. Since we assume that
    \begin{equation}
        C_{\rm Hol}(\N^{\otimes n})=nC_{\rm Hol}(\N),
    \end{equation}
then due to $nC_{\rm Hol}(\N)=nC_{\rm Hol}(\C\circ\N)$ we finally get the lemma statement
\begin{equation}
    C_{\rm Hol}((\C\circ\N)^{\otimes n})=nC_{\rm Hol}(\C\circ\N).
\end{equation}
\end{proof}
Looking at Fig.~\ref{fig:chnl_struct} we can see why additivity of $\S_{\ell}^c$ for $\ell\geq3$ implies additivity of $\P^\lam_{\ell}\circ\Cl{\ell-1}$. The reason lies in the fact that $\P^\lam_{\ell}\circ\Cl{\ell-1}$ composed with complex conjugation is equal to $\S_{\ell}^c$ for all $\ell$.

\subsection{The classical capacity of the Unruh channel}

The output of the Unruh channels is a weighted direct sum of outputs of cloning channels $\Cl{\ell-1}$ for all $\ell$. From Theorem~\ref{thm1} we  know that the Holevo capacity of all of them is additive. This directly leads to the proof of additivity of the Holevo capacity for the Unruh channels itself.
\begin{thm}\label{thm2}
    The infinite-dimensional Unruh channel studied in~\cite{UnruhCh} is additive
\end{thm}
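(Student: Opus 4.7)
My plan is to reduce Theorem~\ref{thm2} to Lemma~\ref{lem:S_ell-entbreaking}, following exactly the strategy used in Theorem~\ref{thm1}: first show that the complementary channel of the Unruh channel is entanglement-breaking, then combine~\cite{entbreak} (entanglement-breaking channels are strongly additive) with~\cite{ComplCh} (a channel is additive iff its complement is) to transfer additivity back to the Unruh channel itself.

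The key structural step is to make explicit the block-diagonal decomposition inherited from the Stinespring isometry $U_{abcd}(r)$ of Section~\ref{sec:Unruhchannel}. On any qubit input, $V\ket{\psi}$ splits as $\sum_\ell \sqrt{p_\ell}\,\ket{v_\ell}$, with $\ket{v_\ell}$ living in the $\ell$-photon sector of modes $(a,b)$ tensored with the $(\ell{-}1)$-photon sector of $(c,d)$. Because these sectors are mutually orthogonal across different $\ell$ on \emph{both} sides of the bipartition, the cross terms vanish in either partial trace and one obtains
\[
\N_{\mathrm{Unruh}}(\varphi) = \bigoplus_\ell p_\ell\,\Cl{\ell-1}(\varphi), \qquad \N_{\mathrm{Unruh}}^c(\varphi) = \bigoplus_\ell p_\ell\,\S^c_{\ell-1}(\varphi),
\]
with fixed weights $p_\ell = (1-z)^3 \ell(\ell-1) z^{\ell-2}/2$ independent of $\varphi$.

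To show $\N_{\mathrm{Unruh}}^c$ is entanglement-breaking, I would apply $\openone\otimes\N_{\mathrm{Unruh}}^c$ to the qubit maximally entangled state $\Phi^+$, obtaining the direct sum $\bigoplus_\ell p_\ell\,(\openone\otimes\S^c_{\ell-1})(\Phi^+)$. Each summand is separable across the reference/output bipartition by Lemma~\ref{lem:S_ell-entbreaking}, and a direct sum (on the output side only) of states separable with respect to a common bipartition is itself separable, so $\N_{\mathrm{Unruh}}^c$ is entanglement-breaking. Strong additivity of $\N_{\mathrm{Unruh}}^c$ and hence of $\N_{\mathrm{Unruh}}$ then follow from~\cite{entbreak,ComplCh}. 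The main subtle point I anticipate is justifying these two results in the infinite-dimensional setting, since they are normally stated for finite-dimensional channels; however, the shared classical label $\ell$, perfectly correlated between output and environment, effectively reduces the analysis to the finite-dimensional cloning pieces together with a fixed probability distribution $\{p_\ell\}$ of finite Shannon entropy, so the standard arguments proceed block-by-block without modification.
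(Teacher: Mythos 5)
Your proposal is correct in its essentials but takes a genuinely different route from the paper. The paper never touches the complementary channel of the Unruh channel: it proves a separate direct-sum lemma (Lemma~\ref{lem:Smin_directsum}) stating that $\G=q_\A\A\oplus q_\B\B$ is additive whenever $\A,\B$ are additive, covariant, finite-dimensional channels, by noting that for any auxiliary channel $\T$ the output of $\G\otimes\T$ on a pure state is block diagonal, so that $S^{\rm min}(\G\otimes\T)=S(\{q_\A,q_\B\})+q_\A S(\A(\varphi))+q_\B S(\B(\varphi))+\Smin{\T}$; Theorem~\ref{thm2} then follows by writing $\U=\bigoplus_{\ell}p_\ell\,\Cl{\ell-1}$ and controlling the tail via the partial sums $c_K\to1$. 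You instead push the block structure through to the environment, observe that the perfect correlation of photon-number sectors between the $(a,b)$ and $(c,d)$ sides gives $\U^c=\bigoplus_\ell p_\ell\,\S^c_{\ell-1}$, and conclude that $(\openone\otimes\U^c)(\Phi^+)$ is a convex combination of the separable states supplied by Lemma~\ref{lem:S_ell-entbreaking}, hence separable; additivity then transfers back via \cite{entbreak} and \cite{ComplCh} exactly as in Theorem~\ref{thm1}. Your route is arguably tidier: it reuses the Theorem~\ref{thm1} machinery verbatim, needs no separate lemma for direct sums, and sidesteps the covariance identity $C_{\rm Hol}=\log f-\Smin{\N}$, which is delicate when $\dim\HH_{\rm out}=\infty$. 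The price is that you must invoke Shor's entanglement-breaking additivity and the channel/complement additivity equivalence for a channel with infinite-dimensional output, and your closing claim that ``the standard arguments proceed block-by-block without modification'' is an assertion rather than an argument; to make it airtight you should exhibit the measure-and-prepare form of $\U^c$ explicitly (straightforward here, since the input is a qubit and each block contributes a finite separable decomposition with total weight $p_\ell$) and note that all relevant entropies are finite because $\sum_\ell p_\ell\log p_\ell$ converges for $0\leq z<1$. By contrast, the paper's truncation argument keeps every entropy computation inside a finite block from the outset, which is precisely why it introduces Lemma~\ref{lem:Smin_directsum} rather than arguing through the complement.
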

First, let us present a lemma.
\begin{lem}\label{lem:Smin_directsum}
    Let $\A,\B$ be additive and covariant but otherwise arbitrary finite-dimensional channels whose input Hilbert spaces are of the same dimension. Then a channel $\G:\FF\big(\HH\big)\to\FF\big(\HH_\A\oplus\HH_\B\big)$ is additive for any ensemble $\{q_\A,q_\B\}$.
\end{lem}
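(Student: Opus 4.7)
The plan is to exploit the block-diagonal output of $\G(\rho) = q_\A\A(\rho)\oplus q_\B\B(\rho)$. For any state decomposing orthogonally as $\sigma = q_\A\sigma_\A\oplus q_\B\sigma_\B$ one has $S(\sigma) = H(q_\A,q_\B) + q_\A S(\sigma_\A) + q_\B S(\sigma_\B)$, and applied term-by-term to the average output $\G(\sum_i p_i \rho_i)$ and to each $\G(\rho_i)$ of an input ensemble $\{p_i,\rho_i\}$, the binary-entropy prefactor cancels, leaving the clean linearization
\begin{equation*}
\chi(\G,\{p_i,\rho_i\}) = q_\A\,\chi(\A,\{p_i,\rho_i\}) + q_\B\,\chi(\B,\{p_i,\rho_i\}).
\end{equation*}
Taking the supremum over ensembles gives the single-copy upper bound $C_{\rm Hol}(\G) \leq q_\A C_{\rm Hol}(\A) + q_\B C_{\rm Hol}(\B)$. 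The same entropic cancellation applied to $\G^{\otimes n}$, whose output is block-diagonal with blocks indexed by strings $s\in\{A,B\}^n$ carrying weights $q_s = \prod_i q_{s_i}$ and hosting $\N_{s_1}\otimes\cdots\otimes\N_{s_n}$ (with $\N_A\equiv\A$, $\N_B\equiv\B$), yields
\begin{equation*}
C_{\rm Hol}(\G^{\otimes n}) \leq \sum_{s\in\{A,B\}^n} q_s\, C_{\rm Hol}(\N_{s_1}\otimes\cdots\otimes\N_{s_n}).
\end{equation*}

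Next I would invoke the strong additivity of $\A$ and $\B$, which by the paper's convention is additivity of the Holevo capacity against every other channel. Applied inductively, it collapses each $C_{\rm Hol}(\N_{s_1}\otimes\cdots\otimes\N_{s_n})$ to $\sum_i C_{\rm Hol}(\N_{s_i})$; the resulting double sum telescopes to $n(q_\A C_{\rm Hol}(\A) + q_\B C_{\rm Hol}(\B))$. So we obtain $C_{\rm Hol}(\G^{\otimes n}) \leq n(q_\A C_{\rm Hol}(\A) + q_\B C_{\rm Hol}(\B))$.

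The matching lower bound at the single-copy level comes from covariance. For a covariant channel the Holevo capacity is attained on the group-orbit of any pure input that minimizes the output entropy: by covariance every orbit element produces the same output entropy, and by Schur's lemma the orbit averages to $\openone/\dim\HH$, saturating $C_{\rm Hol}(\N)=\log f-\Smin{\N}$. Because $\A$ and $\B$ share the same input Hilbert space and (in the cloning-channel setting motivating Theorem~\ref{thm2}, namely $SU(2)$ acting transitively on pure qubit states) a common covariance group on whose orbits every pure input simultaneously optimizes every $\Cl{\ell-1}$, one ensemble simultaneously saturates $C_{\rm Hol}(\A)$ and $C_{\rm Hol}(\B)$. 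Substituted into the linearization identity it forces $C_{\rm Hol}(\G) = q_\A C_{\rm Hol}(\A) + q_\B C_{\rm Hol}(\B)$. Combined with the trivial superadditivity $C_{\rm Hol}(\G^{\otimes n})\geq n\,C_{\rm Hol}(\G)$ from product ensembles, both chains collapse to equalities and additivity of $\G$ follows.

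The main obstacle is precisely this single-copy lower bound: everything else is algebraic bookkeeping on block-diagonal states and invocation of the strong additivity hypothesis. Covariance enters the argument exactly to certify the existence of a common optimizing ensemble for $\A$ and $\B$, converting the upper-bound chain into an equality; absent such a common optimizer one could only conclude $C_{\rm Hol}(\G) \leq q_\A C_{\rm Hol}(\A) + q_\B C_{\rm Hol}(\B)$, which is insufficient to infer additivity since the $n$-copy upper bound would then exceed $n\,C_{\rm Hol}(\G)$.
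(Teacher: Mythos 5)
Your proposal is correct in its mechanics and takes a genuinely different route from the paper. The paper never manipulates the Holevo quantity directly: it tensors $\G$ with an arbitrary channel $\T$, uses the block-diagonal structure of $(\G\otimes\T)(\o)$ to write $S^{\rm min}(\G\otimes\T)=S(\{q_\A,q_\B\})+q_\A S^{\rm min}(\A\otimes\T)+q_\B S^{\rm min}(\B\otimes\T)$, invokes additivity of $\A$ and $\B$ at the level of minimum output entropy, and then relies on covariance (via $C_{\rm Hol}(\N)=\log f-\Smin{\N}$) to translate MOE additivity back into capacity additivity. You instead linearize $\chi$ itself over the direct sum and run an $n$-copy telescoping argument. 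Your route is more self-contained --- it avoids the MOE-to-capacity conversion, which for $\G$ is actually delicate because the output representation $r_2^{\A}\oplus r_2^{\B}$ is reducible and the averaged output of $\G$ is $q_\A\openone/f_\A\oplus q_\B\openone/f_\B$, not maximally mixed --- but it shares with the paper the one genuinely nontrivial step, which you correctly isolate: both arguments need a single pure input (equivalently a single orbit ensemble) that simultaneously optimizes $\A$ and $\B$. The paper silently writes $q_\A S(\A(\varphi))+q_\B S(\B(\varphi))$ with a common $\varphi$; you make the transitivity of $SU(2)$ on pure qubit states explicit, which is the honest way to close that step in the intended application.

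One point to repair: as written you only conclude weak additivity, $C_{\rm Hol}(\G^{\otimes n})=nC_{\rm Hol}(\G)$, whereas the paper's stated convention is that ``additive'' means strong additivity, and the use of this lemma in Theorem~\ref{thm2} effectively needs $\G$ tensored against further blocks. The fix is immediate with your own tools: apply the linearization to $\G\otimes\T$ for arbitrary $\T$ to get $\chi(\G\otimes\T,\cdot)=q_\A\,\chi(\A\otimes\T,\cdot)+q_\B\,\chi(\B\otimes\T,\cdot)\le q_\A C_{\rm Hol}(\A)+q_\B C_{\rm Hol}(\B)+C_{\rm Hol}(\T)=C_{\rm Hol}(\G)+C_{\rm Hol}(\T)$, using the strong additivity of $\A,\B$ and your single-copy identity for the last equality. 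This is shorter than the $n$-copy bookkeeping and delivers exactly what the lemma claims.
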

\begin{proof}
    The channel output is unitarily equivalent to
    \begin{equation}\label{channleaction}
        \r\xrightarrow{\G}q_\A\r_\A\oplus q_B\r_B\equiv\kb{0}{0}\otimes q_\A\r_\A+\kb{1}{1}\otimes q_\B\r_\B.
    \end{equation}
    Defining $\T$ to be an arbitrary channel we see that for any input pure state $\o$ of the channel $\G\otimes\T$ the output state is a block-diagonal matrix $\s=q_\A(\A\otimes\T)(\o)\oplus q_\B(\B\otimes\T)(\o)$. Thus, $S(\s)=S(\{q_\A,q_\B\})+q_\A S((\A\otimes\T)(\o))+q_\B S((\B\otimes\T)(\o))$.
    Hence
    \begin{align}
    S^{\rm min}(\G\otimes\T)&=S(\{q_\A,q_\B\})+q_\A\min_\o\{S((\A\otimes\T)(\o))\}\nonumber\\
    &\ \ \ +q_\B\min_{\o^\p}\{S((\B\otimes\T)(\o^\p))\}\nonumber\\
    &=S(\{q_\A,q_\B\})\nonumber\\
    &\ \ \ +q_\A S(\A(\varphi))+q_\B S(\B(\varphi))+\Smin{\T}\nonumber\\
    &\equiv S(\G(\varphi))+\Smin{\T}
    \end{align}
    using the properties of $\A$ and $\B$.
\end{proof}
\begin{proof}[Proof of Theorem~\ref{thm2}]
The proof is a direct application of the previous lemma since the Unruh channel happens to be $\U(\varphi)=\bigoplus_{\ell=2}^\infty p_\ell\Cl{\ell-1}(\varphi)$ where $p_\ell=(1-z)^3z^{\ell-2}(\ell-1)\ell/2,0\leq z<1$. The channel $\U(\varphi)$ is the same channel as in Eq.~(\ref{squeeze}) where it is written in the Fock space representation.

We show using the block-diagonal structure of the output state and the unitary covariance of the Unruh channel that the inductive process described above approximates the channel output with an arbitrary precision for any input qubit.
Namely, let us denote a partial sum $c_K=\sum_{\ell=2}^Kp_\ell$. We get
\begin{equation}
    c_K={1\over2}\big(2+2(K^2-1)z^K-K(K+1)z^{K-1}-K(K-1)z^{K+1}\big)
\end{equation}
and so $\lim_{K\to\infty}{c_K}=1$ for all $0\leq z<1$.
\end{proof}
\begin{rem}
    Note that the channel input for otherwise infinite-dimensional Unruh channel is naturally energy constrained since the set of input states is limited to qubits.
\end{rem}
\begin{cor}
    Theorem~\ref{thm2} enables us to bring up the formula for the classical capacity of the Unruh channel. Using Eq.~(\ref{HolevoCap}) and the covariance of the Unruh channel we get
     \begin{equation}\label{capUnruh}
        C(\U)=1-\sum_{\ell=2}^\infty p_\ell\log{(\ell-1)} + \sum_{\ell=2}^\infty {p_\ell\over\Delta_\ell}\sum_{k=0}^{\ell-1} k\log{k}.
    \end{equation}
    The plot in Fig.~\ref{fig:chnl_UnClCap} depicts the Holevo capacity as a function of the parameter $z$.
\end{cor}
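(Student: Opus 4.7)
My plan is to combine Theorem~\ref{thm2} with the $SU(2)$ covariance of $\U$ to collapse the Holevo supremum~(\ref{HolevoCap}) into a difference of two explicit entropies, and then evaluate these using the block-diagonal decomposition $\U(\varphi)=\bigoplus_{\ell=2}^\infty p_\ell\Cl{\ell-1}(\varphi)$ already used in the proof of Theorem~\ref{thm2}. By Theorem~\ref{thm2} we have $C(\U)=C_{\rm Hol}(\U)$, so only a single-letter computation is needed. Because $\U$ is covariant under the natural $SU(2)$ action, $S(\U(U\varphi U^\dg))=S(\U(\varphi))$ for every pure input $\varphi$ and every $U\in SU(2)$; thus on any pure-state ensemble the second term of~(\ref{HolevoCap}) collapses to $S(\U(\varphi_0))$ for any fixed pure qubit $\varphi_0$. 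The first term is bounded above by $S(\U(\openone/2))$ by concavity of $S\circ\U$ together with $SU(2)$-invariance of the qubit state space, and this bound is saturated by the equal-weight ensemble on $\kb{0}{0},\kb{1}{1}$. Hence
\begin{equation*}
    C(\U)=S(\U(\openone/2))-S(\U(\varphi_0)).
\end{equation*}

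Next I would evaluate both entropies with the direct-sum identity $S\bigl(\bigoplus_\ell p_\ell\r_\ell\bigr)=H(\{p_\ell\})+\sum_\ell p_\ell S(\r_\ell)$, where $H$ is the Shannon entropy. Picking $\varphi_0=\kb{0}{0}$ forces $\a=1,\b=0,\vec n=\hat z$, so~(\ref{StokesplusConjugateStokes1}) is already diagonal and yields $\Cl{\ell-1}(\varphi_0)=\Delta_\ell^{-1}\sum_{k=0}^{\ell-1}k\kb{k}{k}$ with entropy $\log\Delta_\ell-\Delta_\ell^{-1}\sum_{k=0}^{\ell-1}k\log k$; while $\Cl{\ell-1}(\openone/2)=\openone^{(\ell)}/\ell$ (obtained from~(\ref{StokesplusConjugateStokes1}) with $\vec n=0$) has entropy $\log\ell$. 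The $H(\{p_\ell\})$ terms cancel on subtraction, leaving
\begin{equation*}
    C(\U)=\sum_{\ell=2}^\infty p_\ell\bigl[\log\ell-\log\Delta_\ell\bigr]+\sum_{\ell=2}^\infty\frac{p_\ell}{\Delta_\ell}\sum_{k=0}^{\ell-1}k\log k.
\end{equation*}
With $\Delta_\ell=\ell(\ell-1)/2$ the bracketed quantity reduces to $1-\log(\ell-1)$, and $\sum_\ell p_\ell=1$ then reproduces Eq.~(\ref{capUnruh}).

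The only non-algebraic subtlety is analytic: $\U$ has infinite-dimensional output, so one must justify that the direct-sum entropy identity extends to countably many blocks and that both series in~(\ref{capUnruh}) converge absolutely. Both follow from the exponential decay of $p_\ell=(1-z)^3z^{\ell-2}(\ell-1)\ell/2$ for $0\leq z<1$, which dominates the $O(\log\ell)$ growth of each block's entropy. I expect this to be the only step requiring genuine care rather than routine algebra.
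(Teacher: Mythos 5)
Your proposal is correct and follows essentially the same route as the paper, which simply invokes Eq.~(\ref{HolevoCap}) together with covariance: covariance reduces the Holevo supremum to $S(\U(\openone/2))-S(\U(\varphi_0))$, and the block-diagonal decomposition $\U(\varphi)=\bigoplus_\ell p_\ell\Cl{\ell-1}(\varphi)$ then gives the weighted sum $\sum_\ell p_\ell C(\Cl{\ell-1})$, which is exactly Eq.~(\ref{capUnruh}). You in fact supply details the paper leaves implicit (the cancellation of $H(\{p_\ell\})$, the convergence of the series, and the replacement of the finite-dimensional $\log f$ formula by a direct entropy computation valid for the infinite-dimensional output), all of which check out.
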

\begin{figure}[t]
\begin{center}
    \resizebox{9.4cm}{6cm}{\includegraphics{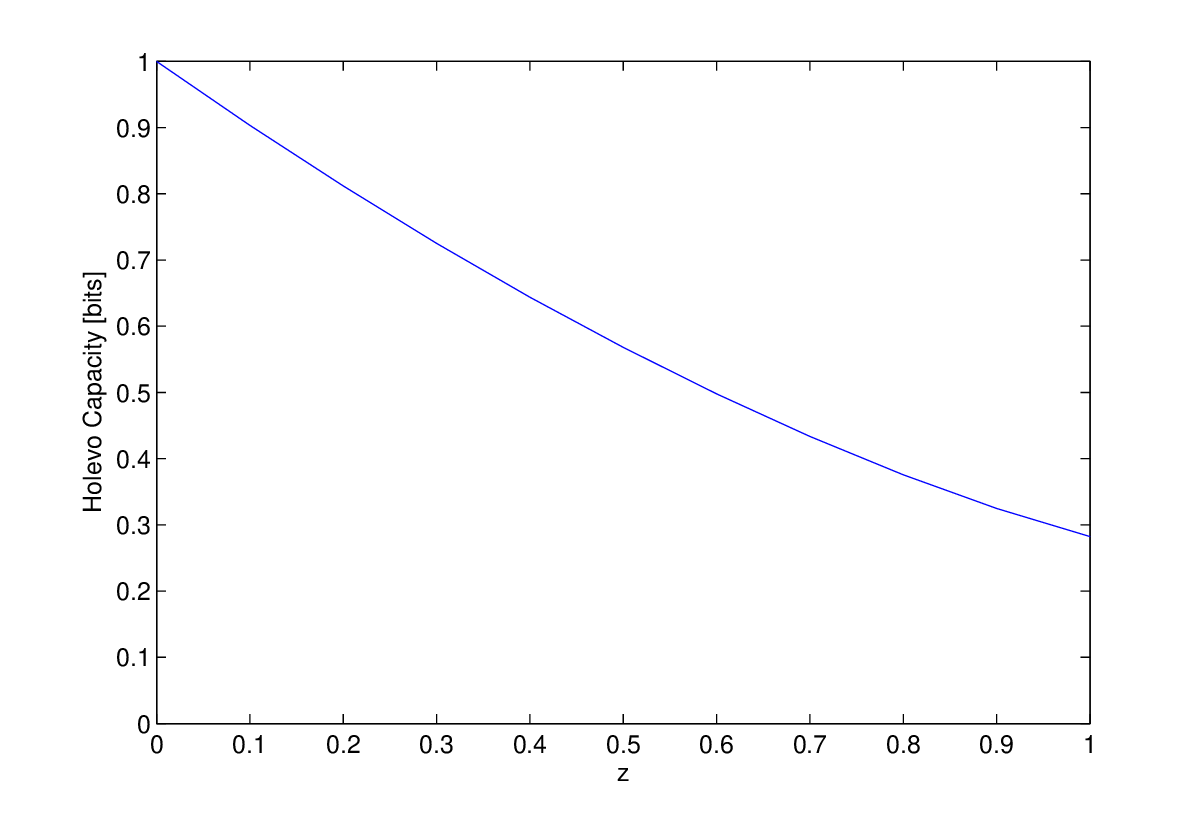}}
    \caption{The Holevo capacity for the Unruh channel as a function of $z$. Recall that $z$ itself is  a function of the proper acceleration of a non-inertial observer. Note that in the limit of infinite acceleration ($z\to1$) the capacity converges to a non-zero value. This value is the same as the capacity for $\Cl{\ell-1}$ for $\ell\to\infty$ from Eq.~(\ref{eq:cloningchannelclasscap}) since the peak of probability distribution is `moving' towards infinity with growing acceleration.}
    \label{fig:chnl_UnClCap}
    \end{center}
\end{figure}

\section{Degrading map construction}\label{sec:degrad}

Let us attempt to construct degrading maps for  several  low-dimensional cloning channels $\Cl{\ell-1}$ we studied in the previous section.

We first analyze the case $\ell=3$. Looking at Eqs.~(\ref{StokesplusConjugateStokes1}) and (\ref{StokesplusConjugateStokes2}) we see that the complementary output of every $\Cl{\ell-1}$ is effectively conjugated with respect to the channel output. This fact together with the unitary covariance of cloning channels leads to the condition similar to Eq.~(\ref{covcond})
\begin{equation}\label{channelcovariance}
    \overline{\Dthree(r_1(g)\varrho\,r_1(g)^\dagger)}=r_2\,\overline{\Dthree(\varrho)}\,r_2^\dagger,
\end{equation}
where the presence of bars is the result of complex conjugation.  In this case, $r_2$ and $r_1$ is the two- and three-dimensional irrep of $g\in G=SU(2)$, respectively. This is, however, the same as the contravariance condition
\begin{equation}\label{channelcontracovariance}
    \Dthree(r_1\varrho\,r_1^\dagger)=\overline{r_2(g)}\,\Dthree(\varrho)\,r_2(g)^T.
\end{equation}
By rephrasing this condition within the Choi-Jamio\l kowski isomorphism~\cite{jamiolk} we get
\begin{equation}\label{jamicovariance}
    \big[\,\overline{R_\Dthree},r_2\otimes r_1\big]=0
\end{equation}
when $\overline{R_\Dthree}$ is a positive semidefinite matrix corresponding to the CP map $\Dthree$. One of Schur's lemmas dictates $\overline{R_\Dthree}=\bigoplus_ic_i\Pi_i$ where $c_i\geq0$ and $\Pi_i$ are projectors into the subspaces of the split product $[2]\otimes[3]=[2]\oplus[4]$ and thus $\overline{R_\Dthree}\equiv R_\Dthree$. We insert $R_\Dthree$ into $\Dthree(\e_{3_{in}})=\Tr{in}\big[\left(\openone_{out}\otimes\bar\e_{3_{in}}\right)R_\Dthree\big]$ since we are looking for such $R_\Dthree$ that $\bar\k_2=\Dthree(\e_{3})$ where (index $in$ omitted)
\begin{gather}\label{inputoutput_ell3}
   \e_3={1\over3}
    \begin{pmatrix}
                               2|\a|^2 & \sqrt{2}\a\bar \b & 0 \\
                               \sqrt{2}\bar \a\b & 1 & \sqrt{2}\a\bar \b \\
                               0 & \sqrt{2}\bar \a\b & 2|\b|^2 \\
    \end{pmatrix},\\
   \bar\k_2={1\over3}
    \begin{pmatrix}
                               |\a|^2+1 & \bar \a\b \\
                               \a\bar \b & |\b|^2+1\\
    \end{pmatrix}.
\end{gather}
In other words, we maximize the fidelity between these two states checking whether it reaches one for some $c_1,c_2$ considering the constraints $c_{1,2}\geq0$ and $\Tr{out}\big[R_\Dthree\big]=\openone^{(3)}$. Because we are dealing with mixed states, we use the fidelity expression due to Bures which simplifies for two-dimensional matrices~\cite{FidelityDim2} as
\begin{equation}\label{Buresfid}
    F(\Dthree(\e_{3}),\bar\k_2)=\Tr{}[\Dthree(\e_{3})\bar\k_2]+2\sqrt{\Det[\Dthree(\e_{3})]\Det[\bar\k_2]}.
\end{equation}
As expected from the results in section~\ref{subsec:infseq}, the fidelity reaches one. In general, the decomposition $\overline{R_\Dell}=\bigoplus_ic_i\Pi_i$ might be difficult to determine. Nevertheless, the good news is that an ansatz can be made. Following the lowest-dimensional exact solutions for the form of the degrading maps of $\Cl{2}$ and $\Cl{3}$ we observe that the only surviving coefficient $c_i$ from the expression for the Jamio\l kowski matrices is the one accompanying the highest irrep of the $SU(2)$ tensor product. Indeed, applying this guess on a few more $1\to (\ell-1)$ cloning channels ($\ell=5,6,7$) it always yields the sought degrading map. So we know that $\Cl{\ell-1}$ are degradable and the construction of the degrading maps might be hard for large $\ell$ but verification of the ansatz is very fast even for large $\ell$.

\section{Conclusions}
The general non-additivity result for the classical capacity of quantum channels is in some sense very satisfactory. Not only did entanglement prove to be useful for the transmission of classical information but it will spark even more effort to find out what makes a channel (non-)additive. Also, some novel strategies may be found to prove (non-)additivity for particular channels as it is now known that there is no general proof.  In this paper we investigated an infinite family of channels we call $1\to (\ell-1)$ cloning channels ($\ell=2\dots\infty$) which are the incarnations of universal quantum cloning machines for qubits.
To prove additivity of the Holevo capacity for  cloning channels we used the fact that cloning channels enjoy the property of being conjugate degradable channels. Conjugate degradable channels already prove to be a useful concept since it is known that their quantum capacity has a single-letter formula. We have therefore found an infinite family of channels for which both the classical and quantum capacity is easily calculable. Also, we were able to prove that $1\to (\ell-1)$ cloning channels are degradable. Furthermore, using the fact that cloning channels are intimately related to an infinite-dimensional channel called the Unruh channel  we were also able to present  the additivity proof of the Unruh channel which otherwise seems intractable. The infinite-dimensional Unruh channel is now a member of a rare family of channels with both capacities easily calculable and non-zero since the existence of a single-letter quantum capacity formula has been proved elsewhere. This result might find an important future application in quantum field theory in curved spacetime considering the prominent role the Unruh channel plays in this branch of modern physics.

\section*{acknowledgments}
The author is grateful to Patrick Hayden for comments and discussions. The comments on early versions of the manuscript made by Dave Touchette, Min-Hsiu Hsieh and Mark Wilde are also appreciated. The work was supported  by QuantumWorks and by a grant from the Office of Naval Research  (N000140811249).

\bibliographystyle{IEEEtran}

\end{document}